\newcommand{\op}{\ensuremath{{\perp}}}
\title{Relative Errors for Deterministic Low-Rank Matrix Approximations}
\author{
Mina Ghashami\\ University of Utah \\ \texttt{ghashami@cs.utah.edu}
\and
Jeff M. Phillips\\ University of Utah \\ \texttt{jeffp@cs.utah.edu}
}
\begin{document}
\begin{titlepage}
\maketitle

\begin{abstract}
We consider processing an $n \times d$ matrix $A$ in a stream with row-wise updates according to a recent algorithm called Frequent Directions (Liberty, KDD 2013).  This algorithm maintains an $\ell \times d$ matrix $Q$ deterministically, processing each row in $O(d \ell^2)$ time; the processing time can be decreased to $O(d\ell)$ with a slight modification in the algorithm and a constant increase in space.  Then for any unit vector $x$, the matrix $Q$ satisfies
\[
0 \leq \|A x\|^2 - \|Qx\|^2 \leq \|A\|_F^2/\ell.  
\]
We show that if one sets $\ell = \lceil k+ k/\eps \rceil$ and returns $Q_k$, a $k \times d$ matrix that is simply the top $k$ rows of $Q$, then we achieve the following properties:
\[
\|A - A_k\|_F^2 \leq \|A\|_F^2 - \|Q_k\|_F^2 \leq (1+\eps) \|A - A_k\|_F^2
\]
and where $\pi_{Q_k}(A)$ is the projection of $A$ onto the rowspace of $Q_k$ then
\[
\|A - \pi_{Q_k}(A)\|_F^2 \leq (1+\eps) \|A - A_k\|_F^2.
\]
  
We also show that Frequent Directions cannot be adapted to a sparse version in an obvious way that retains $\ell$ original rows of the matrix, as opposed to a linear combination or sketch of the rows.  
\end{abstract}
\end{titlepage}

\section{Introduction}

The data streaming paradigm~\cite{muthukrishnan05:_book} considers computation on a large data set $A$ where one data item arrives at a time, is processed, and then is not read again.  It enforces that only a small amount of memory is available at any given time.  This small space constraint is critical when the full data set cannot fit in memory or disk.  Typically, the amount of space required is traded off with the accuracy of the computation on $A$.  Usually the computation results in some summary $S(A)$ of $A$, and this trade-off determines how accurate one can be with the available space resources.  Although computational runtime is important, in this paper we mainly focus on space constraints and the types of approximation guarantees that can be made.  

In truly large datasets, one processor (and memory) is often incapable of handling all of the dataset $A$ in a feasible amount of time.  Even reading a terabyte of data on a single processor can take many hours.  Thus this computation is often spread among some set of processors, and then the summary of $A$ is combined after (or sometimes during~\cite{Cor13}) its processing on each processor.  Again often each item is read once, whether it comes from a single large source or is being generated on the fly.  The key computational problem shifts from updating a summary $S(A)$ when witnessing a single new data item (the streaming model), to taking two summaries $S(A_1)$ and $S(A_2)$ and constructing a single new summary $S(A_1 \cup A_2)$.  In this new paradigm the goal is to have the same space-approximation trade-offs in $S(A_1 \cup A_2)$ as possible for a streaming algorithm.  When such a process is possible, the summary is known as a \emph{mergeable summary}~\cite{ACHPWY12}.  Linear sketches are trivially mergeable, so this allows many streaming algorithms to directly translate to this newer paradigm.  
Again, space is a critical resource since it directly corresponds with the amount of data needed to transmit across the network, and emerging cost bottleneck in big data systems.  

In this paper we focus on \emph{deterministic} mergeable summaries for low-rank matrix approximation, based on recent work by Liberty~\cite{Lib12}, that is already known to be mergeable~\cite{Lib12}.  Thus our focus is a more careful analysis of the space-error trade off for the algorithm, and we describe them under the streaming setting for simplicity; all bounds directly carry over into mergeable summary results.  

In particular we re-analyze the Frequent Directions algorithm of Liberty to show it provides relative error bounds for matrix sketching, and conjecture it achieves the optimal space, up to $\log$ factors, for any row-update based summary.  
This supports the strong empirical results of Liberty~\cite{Lib12}.  
His analysis only provided additive error bounds which are hard to compare to more conventional ways of measuring accuracy of matrix approximation algorithms.

\subsection{Problem Statement and Related Work}

In this problem $A$ is an $n \times d$ matrix and the stream processes each row $a_i$ (of length $d$) at a time.  Typically the matrix is assumed to be \emph{tall} so $n \gg d$, and sometimes the matrix will be assumed to be \emph{sparse} so the number of non-zero entries $\nnz(A)$ of $A$ will be small, $\nnz(A) \ll nd$ (e.g. $\nnz(A) = O((n+d)\log(nd)))$.  

The best rank-$k$ approximation to $A$ (under Frobenius or 2 norm) is denoted as $A_k$ and can be computed in $O(nd^2)$ time on a tall matrix using the singular value decomposition.  The $\svd(A)$ produces three matrices $U$, $S$, and $V$ where $U$ and $V$ are orthonormal, of size $n \times n$ and $d \times d$, respectively, and $S$ is $n \times d$ but only has non-zero elements on its diagonal $\{\sigma_1, \ldots, \sigma_d\}$.  
Let $U_k$, $S_k$, and $V_k$ be the first $k$ columns of each matrix, then $A = U S V^T$ and $A_k = U_k S_k V_k^T$.  
Note that although $A_k$ requires $O(nd)$ space, the set of matrices $\{U_k, S_k,V_k\}$ require only a total of $O((n+d)k)$ space (or $O(nk)$ if the matrix is tall).  Moreover, even the set $\{U, S, V\}$ really only takes $O(nd+d^2)$ space since we can drop the last $n-d$ columns of $U$, and the last $n-d$ rows of $S$ without changing the result.  
In the streaming version, the goal is to compute something that replicates the effect of $A_k$ using less space and only seeing each row once.  

\paragraph{Construction bounds.}
The strongest version, (providing \emph{construction} bounds) for some parameter $\eps\in(0,1)$, is some representation of a rank $k$ matrix $\hat A$ such that $\|A - \hat A\|_\xi \leq (1+\eps) \|A - A_k\|_\xi$ for $\xi = \{2,F\}$.  Unless $A$ is sparse, then storing $\hat A$ explicitly may require $\Omega(nd)$ space, so that is why various representations of $\hat A$ are used in its place.  This can include decompositions similar to the SVD, e.g.  a CUR decomposition~\cite{drineas2003pass,mahoney2009cur,drineas2008relative} where $\hat A = C U R$ and where $U$ is small and dense, and $C$ and $R$ are sparse and skinny, or others~\cite{clarkson2013low} where the middle matrix is still diagonal.  The sparsity is often preserved by constructing the wrapper matrices (e.g. $C$ and $R$) from the original columns or rows of $A$.  
There is an obvious $\Omega(n + d)$ space bound for any construction result in order to preserve the column and the row space.  

\paragraph{Projection bounds.}
Alternatively, a weaker version (providing \emph{projection} bounds) just finds a rank $k$ subspace $B_k$ where the projection of $A$ onto this subspace $\pi_{B_k}(A)$ represents $\hat A$.  This bound is weaker since this cannot actually represent $\hat A$ without making another pass over $A$ to do the projection.  An even weaker version finds a rank $r > k$ subspace $B$, where $\hat A$ is represented by the best rank $k$ approximation of $\pi_B(A)$; note that $\pi_B(A)$ is then also rank $r$, not $k$.  
However, when $B$ or $B_k$ is composed of a set of $\ell$ rows (and perhaps $B_k$ is only $k$ rows) then the total size is only $O(d \ell)$ (allotting constant space for each entry); so it does not depend on $n$.  This is a significant advantage in tall matrices where $n \gg d$.  
Sometimes this subspace approximation is sufficient for downstream analysis, since the rowspace is still (approximately) preserved.  For instance, in PCA the goal is to compute the most important directions in the row space.  

\paragraph{Streaming algorithms.}
Many of these algorithms are \emph{not} streaming algorithms.  To the best of our understanding, the best streaming algorithm~\cite{clarkson2009numerical} is due to Clarkson and Woodruff.  All bounds assume each matrix entry requires $O(\log nd)$ bits.  It is randomized and it constructs a decomposition of a rank $k$ matrix $\hat A$ that satisfies $\|A - \hat A\|_F \leq (1+\eps) \|A - A_k\|_F$, with probability at least $1-\delta$. This provides a relative error construction bound of size $O((k/\eps)(n + d) \log(nd))$ bits.  
They also show an $\Omega((k/\eps)(n+d))$ bits lower bound.  

Although not explicitly described in their paper, one can directly use their techniques and analysis to achieve a weak form of a projection bound.  One maintains a matrix $B = A S$ with $m = O((k/\eps) \log(1/\delta))$ columns where $S$ is a $d \times m$ matrix where each entry is chosen from $\{-1,+1\}$ at random.
Then setting $\hat A = \pi_{B}(A)$, achieves a $(1+\eps)$ projection bound, however $B$ is rank $O((k/\eps) \log(1/\delta))$ and hence that is the only bound on $\hat A$ as well.  
The construction lower bound suggests that there is an $\Omega(dk/\eps)$ bits lower bound for projection, but this is not directly proven. 

They also study this problem in the \emph{turnstile} model where each element of the matrix can be updated at each step (including subtractions).  In this model they require $O((k/\eps^2)(n+d/\eps^2)\log(nd))$ bits, and show an $\Omega((k/\eps)(n+d)\log(nd))$ bits lower bound.  

Another more general ``coreset" result is provided by Feldman \etal~\cite{FSS13}.  In the streaming setting it requires $O((k/\eps) \log n)$ space and can be shown to provide a rank $O(k/\eps)$ matrix $B$ that satisfies a relative error bound of the form $\|A - \pi_B(A)\|_F^2 \leq (1+\eps) \|A - A_k\|_F^2$.  

\paragraph{Column sampling.}
Another line of work~\cite{drineas2003pass, drineas2006fast3, drineas2008relative, mahoney2009cur,boutsidis2011near,deshpande2006adaptive,drineas2006fast1, drineas2006fast2, rudelson2007sampling,achlioptas2001fast} considers selecting a set of rows from $A$ directly (not maintaining rows that for instance may be linear combinations of rows of $A$).  This maintains sparsity of $A$ implicitly and the resulting summary may be more easily interpretable.  Note, they typically consider the transpose of our problem and select columns instead of rows, and sometimes both.  
An algorithm~\cite{boutsidis2011near} can construct a set of $\ell = (2k/\eps)(1 + o(1))$ columns $R$ so that
$\|A - \pi_R(A)\|_F^2 \leq (1+\eps) \|A - A_k\|_F^2$.  There is an $\Omega(k/\eps)$ lower bound~\cite{deshpande2006adaptive}, but this enforces that only rows of the original matrix are retained and does not directly apply to our problem.  And these are not streaming algorithms.  

Although not typically described as streaming algorithms (perhaps because the focus was on sampling columns which already have length $n$) when a matrix is processed row wise there exists algorithms that can use reservoir sampling to become streaming.  
The best streaming algorithm~\cite{drineas2006fast2} samples $O(k /\eps^2)$ rows (proportional to their squared norm) to obtain a matrix $R$ so that
$\|A - \pi_R(A)\|_F^2 \leq \|A - A_k\|_F^2 + \eps \|A\|_F^2$, a weaker additive error bound.  
These techniques can also build approximate decompositions of $\hat A$ instead of using $\pi_R(A)$, but again these decompositions are only known to work with at least $2$ passes, and are thus not streaming.    

\paragraph{Other.}
There is a wealth of literature on this problem; most recently two algorithms~\cite{clarkson2013low,NN13} showed how to construct a decomposition of $\hat A$ that has rank $k$ with error bound 
$\|A - \hat A\|_F^2 \leq (1+\eps) \|A - A_k\|_F^2$
with constant probability in approximately $O(\nnz(A))$ time.  We refer to these papers for a more thorough survey of the history of the area, many other results, and other similar approximate linear algebra applications.  
But we attempt to report many of the most important related results in Appendix \ref{app:tables}.

Finally we mention a recent algorithm by Liberty~\cite{Lib12} which runs in $O(nd/\eps)$ time, maintains a matrix with $2/\eps$ rows in a row-wise streaming algorithm, and produces a matrix $\hat A$ of rank at most $2/\eps$ so that for any unit vector $x$ of length $d$ satisfies
$0 \leq \|A x\|^2 - \|\hat A x\|^2 \leq \eps \|A\|^2_F$.  We examine a slight variation of this algorithm and describe bounds that it achieves in more familiar terms.  

\paragraph{Incremental PCA.}
We mention one additional line of work on \emph{incremental PCA}~\cite{golub2012matrix, hall1998incremental, levey2000sequential, brand2002incremental,ross2008incremental}.  
These approaches attempt to maintain the PCA of a dataset $A$ (using SVD and a constant amount of additional bookkeeping space) as each row of $A$ arrives in a stream.  In particular, after $i-1$ rows they consider maintaining $A^i_k$, and on a new row $a_i$ compute $\svd([A^i_k; a_i]) = U^i S^i (V^i)^T$ and, then only retain its top rank $k$ approximation as $A^{i+1}_k = U^i_k S^i_k (V^i_k)^T$.  
This is remarkably similar to Liberty's algorithm~\cite{Lib12}, but is missing the Misra-Gries~\cite{mg-fre-82} step (we describe Liberty's algorithm in more detail in Section \ref{sec:FreqDir}).  As a result, incremental PCA can have arbitrarily bad error on adversarial data.  

Consider an example where the first $k$ rows generate a matrix $A_k$ with $k$th singular value $\sigma_k = 10$.  Then each row thereafter $a_i$ for $i>k$ is orthogonal to the first $k$ rows of $A$, and has norm $5$.  This will cause the $(k+1)$th right singular vector and value $\sigma_{k+1}$ of $\svd([A^i_k; a_i])$ to exactly describe the subspace of $a_i$ with $\sigma_{k+1} = 5$.  Thus this row $a_i$ will always be removed on the processing step and $A^{i+1}_k$ will be unchanged from $A^i_k$.  If all rows $a_i$ for $i>k$ are pointing in the same direction, this can cause arbitrarily bad errors of all forms of measuring approximation error considered above.  

\subsection{Our Results}

Our main result is a deterministic relative error bound for low-rank matrix approximation.  A major highlight is that all proofs are, we believe, quite easy to follow. 

\paragraph{Low-rank matrix approximation.}
We slightly adapt the streaming algorithm of Liberty~\cite{Lib12}, called \emph{Frequent Directions} to maintain $\ell = \lceil k + k/\eps\rceil$ rows, which outputs an $\ell \times d$ matrix $Q$.  Then we consider $Q_k$ a $k \times d$ matrix, the best rank $k$ approximation to $Q$ (which turns out to be its top $k$ rows).  
We show that 
\[
\|A - \pi_{Q_k}(A)\|^2_F \leq (1+\eps) \|A - A_k\|^2_F
\]
and that
\[
\|A - A_k\|_F^2 \leq \|A\|_F^2 - \|Q_k\|_F^2 \leq (1+\eps) \|A - A_k\|_F^2.
\]
This algorithm runs in $O(ndk^2/\eps^2)$ time.  If we allow $\ell = c \lceil k + k/\eps \rceil$ for any constant $c > 1$, then it can be made to run in $O(ndk/\eps)$ time with the same guarantees on $Q_k$.  

This is the smallest space streaming algorithm known for these bounds.  Also, it is deterministic, whereas previous algorithms were randomized.  

We note that it is common for the bounds to be written without squared norms, for instance as
$\|A - \pi_{Q_k}(Q)\|_F \leq (1+\eps) \|A - A_k\|_F$.  
For $\eps > 0$, if we take the square root of both sides of the bound above 
$\|A - \pi_{Q_k}(A)\|_F^2 \leq (1+\eps) \|A - A_k\|_F^2$, 
then we still get a $\sqrt{(1+\eps)} \leq (1+\eps)$ approximation.  

\paragraph{No sparse Frequent Directions.}
We also consider trying to adapt the Frequent Directions algorithm to column sampling (or rather row sampling), in a way that the $\ell$ rows it maintains are rows from the original matrix $A$ (possibly re-weighted).  This would implicitly preserve the sparsity of $A$ in $Q$.  
We show that this is, unfortunately, not possible.  

\subsection{Matrix Notation}
Here we quickly review some notation.  An $n \times d$ matrix $A$ can be written as a set of $n$ rows as $[a_1; a_2; \ldots, a_n]$ where each $a_i$ is a row of length $d$.  Alternatively a matrix $V$ can be written as a set of columns $[v_1, v_2, \ldots, v_d]$.  

The Frobenius norm of a matrix $A$ is defined $\|A\|_F = \sqrt{\sum_{i=1} \|a_i\|^2}$ where $\|a_i\|$ is Euclidean norm of $a_i$.  
Let $A_k$ be the best rank $k$ approximation of the matrix $A$, specifically $A_k = {\arg \max}_{C : \rank(C) \leq k} \|A - C\|_F$.  

Given a row $r$ and a matrix $X$ let $\pi_X(r)$ be a \emph{projection} operation of $r$ onto the subspace spanned by $X$.  In particular, we will project onto the row space of $X$, and this can be written as $\pi_X(r) = r X^T(X X^T)^{+} X$ where $Y^+$ indicates taking the Moore-Penrose psuedoinverse of $Y$.
But whether it projects to the row space or the column space will not matter since we will always use the operator inside of a Frobenius norm.  

This operator can be defined to project matrices $R$ as well, denoted as $\pi_X(R)$, where this can be thought of as projecting each row of the matrix $R$ individually.  

\section{Review of Related Algorithms}

We begin by reviewing two streaming algorithms that our results can be seen as an extension.  The first is an algorithm for heavy-hitters from Misra-Gries~\cite{mg-fre-82} and its improved analysis by Berinde \etal~\cite{BCIS09}.  We re-prove the relevant part of these results in perhaps a simpler way.  
Next we describe the algorithm of Liberty~\cite{Lib12} for low-rank matrix approximation that our analysis is based on.  We again re-prove his result, with a few additional intermediate results we will need for our extended analysis.  
One familiar with the work of Misra-Gries~\cite{mg-fre-82}, Berinde \etal~\cite{BCIS09}, and Liberty~\cite{Lib12} can skip this section, although we will refer to some lemmas re-proven below.  

\subsection{Relative Error Heavy-Hitters}
\label{sec:RelMG}
Let $A = \{a_1, \ldots, a_n\}$ be a set of $n$ elements where each $a_i \in [u]$.  
Let $f_j = |\{a_i \in A \mid a_i = j\}|$ for $j \in [u]$.  
Assume without loss of generality that $f_j \geq f_{j+1}$ for all $j$, and define $F_k = \sum_{j=1}^k f_j$.  This is just for notation, and \emph{not} known ahead of time by algorithms.    

The Misra-Gries algorithm~\cite{mg-fre-82} finds counts $\hat f_j$ so that for all $j \in [u]$ we have $0 \leq f_j - \hat f_j \leq n/\ell$.  It only uses $\ell$ counters and $\ell$ associated labels and works in a streaming manner as follows, starting with all counters empty (i.e. a count of $0$).  
It processes each $a_i$ in (arbitrary) order.  
\vspace{-.05in}
\begin{itemize} \denselist
\item If $a_i$ matches a label, increment the associated counter.  
\item If not, and there is an empty counter, change the label of the counter to $a_i$ and set its counter to $1$.  
\item Otherwise, if there are no empty counters, then decrement all counters by $1$.  
\end{itemize}
\vspace{-.05in}
To return $\hat f_j$, if there is a label with $j$, then return the associated counter; otherwise return $0$.  

Let $r$ be the total number of times that all counters are decremented.  We can see that $r<n/\ell$ since each time one counter is decremented then all $\ell$ counters (plus the new element) are decremented and must have been non-empty before hand.  Thus this can occur at most $n/\ell$ times otherwise we would have decremented more counts than elements.  
This also implies that $f_j - \hat f_j \leq r < n/\ell$ since we only do not count an element if it is removed by one of $r$ decrements.  
This simple, clever algorithm, and its variants, have been rediscovered several times ~\cite{demaine2002frequency, karp2003simple, golab2003identifying,metwally2006integrated}.

Define $\hat F_k = \sum_{j=1}^k \hat f_j$ and let $R_k = \sum_{j=k+1}^u f_j = n - F_k$.  
The value $R_k$ represents the total counts that cannot be described (even optimally) if we only use $k$ counters.  A bound on $F_k - \hat F_k$ in terms of $R_k$ is more interesting than one in terms of $n$, since this algorithm is only useful when there are only really $k$ items that matter and the rest can be ignored.  We next reprove a result of Berinde \etal~\cite{BCIS09} (in their Appendix A).  

\begin{lemma}[Berinde \etal \cite{BCIS09}]
The number of decrements is at most $r \leq R_k /( \ell - k)$.
\label{lem:Tbound}
\end{lemma}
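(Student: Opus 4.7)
The plan is to combine a pigeonhole argument at each decrement event with a bookkeeping bound on how much mass can ever sit in the "tail" counters.

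First I would record a structural observation about the algorithm's state just before any decrement. By the algorithm's own specification, a decrement is triggered only when the new element $a_i$ neither matches an existing label nor finds an empty counter. So immediately before a decrement, all $\ell$ counters are non-empty, and the three cases of the algorithm together imply that their $\ell$ labels are distinct. Since the collection $\{1,\dots,k\}$ of top-$k$ items has only $k$ elements, at most $k$ of these $\ell$ distinct labels can lie in $\{1,\dots,k\}$. Therefore at least $\ell-k$ of the non-empty counters carry a \emph{tail} label, that is, a label $j>k$. Each such counter is decreased by $1$ in the event.

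Next I would aggregate across events. Call a counter decrement a ``tail decrement'' if the counter's current label satisfies $j>k$. By the previous paragraph, each of the $r$ decrement events produces at least $\ell-k$ tail decrements, so the total number of tail decrements is at least $r(\ell-k)$.

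The final step bounds the total number of tail decrements from above by $R_k$. Let $M$ denote the total amount ever added to tail-labeled counters over the whole run: the only way a tail counter can increase is when an element $j>k$ arrives and is counted (either by incrementing an existing matching counter or by being placed in an empty slot), so $M \le \sum_{j>k} f_j = R_k$. At the end of the stream, the sum of all tail-labeled counters is $\sum_{j>k}\hat f_j \ge 0$. Since (total added) $-$ (total subtracted) equals (final value), the total number of tail decrements is at most $M - \sum_{j>k}\hat f_j \le R_k$. Combining the two inequalities gives $r(\ell-k) \le R_k$, which rearranges to $r \le R_k/(\ell-k)$, as required.

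The only delicate point, and the one I would emphasize in the write-up, is the ``distinct labels and all non-empty'' claim at decrement time, since without it the pigeonhole step breaks. Everything else is just careful accounting of increments versus decrements on the restricted set of tail labels.
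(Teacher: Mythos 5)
Your proof is correct and follows essentially the same route as the paper's: at each decrement event at least $\ell-k$ of the decremented counters carry labels outside the top $k$, and those decrements are charged against the tail mass $R_k$. You simply make explicit two steps the paper leaves implicit (the pigeonhole count via distinct non-empty labels, and the increment/decrement ledger showing tail decrements never exceed tail arrivals), which is a faithful elaboration rather than a different argument.
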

\begin{proof}
On each of $r$ decrements at least $\ell-k$ counters not in the top $k$ are decremented.  These decrements must come from $R_k$, so each can be charged to at least one count in $R_k$; the inequality follows.  
\end{proof}

\begin{theorem}
When using $\ell = \lceil k + k/\eps \rceil$ in the Misra-Gries algorithm $F_k - \hat F_k \leq \eps R_k$ and $f_j - \hat f_j \leq \frac{\eps}{k} R_k$.  
If we use $\ell = \lceil k + 1/\eps \rceil$, then $f_j - \hat f_j \leq \eps R_k$.
\end{theorem}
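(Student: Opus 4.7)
The plan is to combine the two facts already established in the subsection: (a) the per-label error bound $f_j - \hat f_j \leq r$, where $r$ is the total number of decrement rounds, and (b) Lemma \ref{lem:Tbound}, which states $r \leq R_k/(\ell - k)$. Everything else is just choosing $\ell$ and summing.

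First I would handle the choice $\ell = \lceil k + k/\eps \rceil$. Since $\ell - k \geq k/\eps$, Lemma \ref{lem:Tbound} immediately gives $r \leq R_k/(k/\eps) = \eps R_k/k$, and combining with $f_j - \hat f_j \leq r$ yields the per-label bound $f_j - \hat f_j \leq \frac{\eps}{k} R_k$. Next I would obtain the aggregate bound on the top $k$ by summing this inequality over $j = 1, \ldots, k$:
\[
F_k - \hat F_k = \sum_{j=1}^k (f_j - \hat f_j) \leq k \cdot \frac{\eps}{k} R_k = \eps R_k.
\]
Note that this sum is over the true top-$k$ labels; some of them may not correspond to stored labels, in which case $\hat f_j = 0$, but the bound $f_j - \hat f_j \leq r$ still holds because any element of a top-$k$ label that was not counted must have been stripped off by one of the $r$ decrement rounds.

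Then for $\ell = \lceil k + 1/\eps \rceil$, the same argument with $\ell - k \geq 1/\eps$ gives $r \leq \eps R_k$ directly, hence $f_j - \hat f_j \leq \eps R_k$.

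There is no real obstacle here; the only subtle point worth flagging in the write-up is why $f_j - \hat f_j \leq r$ applies uniformly to every $j$, including labels never stored, which is exactly the charging argument already given before Lemma \ref{lem:Tbound}.
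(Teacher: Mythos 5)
Your proposal is correct and follows essentially the same route as the paper: both rest on the per-label bound $f_j - \hat f_j \leq r$ together with Lemma \ref{lem:Tbound}, then sum over the top $k$ labels and substitute the chosen value of $\ell$. The only difference is cosmetic ordering (you bound $r$ first and then multiply by $k$, the paper writes $F_k - \hat F_k \leq rk$ first), so there is nothing further to add.
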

\begin{proof}
Using Lemma \ref{lem:Tbound} we have $r \leq R_k/( \ell - k)$.  
Since for all $j$ we have $f_j - \hat f_j \leq r$, then $F_k - \hat F_k \leq rk \leq R_k \frac{k}{\ell - k}$.  
Finally, setting $\ell = k + k/\eps$ results in $F_k - \hat F_k \leq \eps R_k$ and $r \leq \frac{R_k}{\ell - k} = \frac{\eps}{k} R_k$.  

Setting $\ell = k + 1/\eps$ results in $f_j - \hat f_j \leq r \leq \frac{R_k}{\ell-k} = \eps R_k$ for any $j$.  
\end{proof}

This result can be viewed as a warm up for the rank $k$ matrix approximation to come, as those techniques will follow a very similar strategy.

\subsection{Additive Error Frequent Directions}
\label{sec:FreqDir}

Recently Liberty~\cite{Lib12} discovered how to apply this technique towards sketching matrices.  Next we review his approach, and for perspective and completeness re-prove his main results.  

\paragraph{Algorithm.}
The input to the problem is an $n \times d$ matrix $A$ that has $n$ rows and $d$ columns.  It is sometimes convenient to think of each row $a_i$ as a point in $\b{R}^d$.  
We now process $A$ one row at a time in a streaming fashion always maintaining an $\ell \times d$ matrix such that for any unit vector $x \in \b{R}^d$ 
\begin{equation}
\|A x\|^2 - \|Q x\|^2 \leq \|A\|_F^2 / \ell,
\label{eq:FD-inv}
\end{equation}
This invariant (\ref{eq:FD-inv}) guarantees that in any ``direction'' $x$ (since $x$ is a unit vector in $\b{R}^d$), that $A$ and $Q$ are close, where close is defined by the Frobenius norm of $\|A\|_F^2$ over $\ell$.  

Liberty's algorithm is described in Algorithm \ref{alg:freqDir}.  
At the start of each round, the last row of $Q$ will be all zeros.  
To process each row $a_i$, 
we replace the last row (the $\ell$th row) of $Q$ with $a_i$ to create a matrix $Q_i$.  
We take the SVD of $Q_i$ as $[U,S,V] = \svd(Q_i)$.  
Let $\delta = s_\ell^2$, the last (and smallest) diagonal value of $S$, and in general let $s_j$ be the $j$th diagonal value so $S = \diag(s_1, s_2, \ldots, s_\ell)$.  
Now set $s'_j = \sqrt{s_j^2 - \delta}$ for $j \in [\ell]$, and notice that all values are non-negative and $s'_{\ell} = 0$.  
Set $S' = \diag(s'_1, s'_2, \ldots, s'_\ell)$.  
Finally set $Q = S' V^T$.

\begin{algorithm}
\caption{\label{alg:freqDir} Frequent Directions (Liberty \cite{Lib12})}
\begin{algorithmic}
\STATE  Initialize $Q^0$ as an all zeros $\ell \times d$ matrix. 
\FOR {each row $a_i \in A$}
  \STATE  Set $Q_+$ $\leftarrow$ $Q^{i-1}$ with last row replaced by $a_i$
  \STATE  $[Z, S, Y] = \svd(Q_+)$
  \STATE  $C^i = S Y^T$   \hspace{.3in} [\emph{only for notation}]
  \STATE  Set $\delta_i = s_{\ell}^2$  \hspace{.25in} [\emph{the $\ell$th entry of $S$, squared}]  
  \STATE  Set $S' = \diag\left(\sqrt{s_1^2 - \delta_i}, \sqrt{s_2^2 - \delta_i}, \ldots, \sqrt{s_{\ell-1}^2 - \delta_i}, 0\right)$.  
  \STATE  Set $Q^i = S' Y^T$  
\ENDFOR
\STATE \textbf{return} $Q = Q^n$
\end{algorithmic}
\end{algorithm}

It is useful to interpret each row of $Y^T$ as a ``direction,'' where the first row is along the direction with the most variance, all rows are orthogonal, and all rows are sorted in order of variance given that they are orthogonal to previous rows.  Then multiplying by $S'$ scales the $j$th row $y_j$ of $Y^T$ by $s'_j$.  Since $s'_\ell = 0$, then the last row of $Q^i$ must be zero.  

\paragraph{Analysis.}
Let $\Delta = \sum_{i=1}^n \delta_i$.  

\begin{lemma}
For any unit vector $x \in \b{R}^d$ we have $\|C^ix\|^2 - \|Q^ix\|^2 \leq \delta_i$.
\label{lem:L2normInIteration}
\end{lemma}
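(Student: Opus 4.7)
The plan is to unwind the SVD step and reduce both squared-norms to a common coordinate system in which the difference becomes transparent.

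First, I would let $Q_+ = Z S Y^T$ be the SVD produced in iteration $i$, so that $C^i = S Y^T$ and $Q^i = S' Y^T$, where $S = \diag(s_1,\ldots,s_\ell)$ and $S' = \diag(\sqrt{s_1^2-\delta_i},\ldots,\sqrt{s_{\ell-1}^2-\delta_i},0)$. For a unit vector $x \in \mathbb{R}^d$, I would introduce the coordinate vector $w = Y^T x \in \mathbb{R}^\ell$. Then $\|C^i x\|^2 = \|S w\|^2 = \sum_{j=1}^\ell s_j^2 w_j^2$ and $\|Q^i x\|^2 = \|S' w\|^2 = \sum_{j=1}^\ell (s_j^2 - \delta_i) w_j^2$ (the last term just being $0 \cdot w_\ell^2$, which is consistent since $s_\ell^2 - \delta_i = 0$).

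Next I would subtract: the two sums telescope into
\[
\|C^i x\|^2 - \|Q^i x\|^2 \;=\; \delta_i \sum_{j=1}^\ell w_j^2 \;=\; \delta_i \, \|Y^T x\|^2.
\]
Finally, since $Y$ has orthonormal columns, $Y^T$ is a contraction, so $\|Y^T x\|^2 \leq \|x\|^2 = 1$, and the bound $\delta_i$ follows. This also recovers the non-negativity side of Liberty's invariant for a single step, since the difference is manifestly $\geq 0$.

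I do not expect any real obstacle; the only thing to be careful about is that the shrinkage $S \mapsto S'$ subtracts exactly $\delta_i$ from every squared singular value (including the $\ell$th, whose value was $\delta_i$ to begin with), so the uniform factorization $\delta_i \|Y^T x\|^2$ goes through without a boundary case. The key structural observation driving the proof is that the algorithm's shrinkage operates in the eigenbasis $Y$, so expressing things in the $w$-coordinates makes the per-direction loss exactly $\delta_i \, w_j^2$ and collapses the argument to a one-line application of $\|Y^T x\| \leq 1$.
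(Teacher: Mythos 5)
Your proposal is correct and is essentially the paper's own argument: writing both norms in the coordinates $w = Y^Tx$ (equivalently $w_j = \langle y_j, x\rangle$), using $(s'_j)^2 = s_j^2 - \delta_i$, and bounding $\|Y^Tx\|^2 \leq 1$ since $Y$ has orthonormal columns. No gaps.
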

\begin{proof}
Let $Y_j$ be the $j$th column of $Y$, then
\[
\|C^i x\|^2 
= 
\sum_{j=1}^\ell s_j^2 \langle y_j, x\rangle^2
=
\sum_{j=1}^\ell ((s'_j)^2 + \delta_i) \langle y_j, x\rangle^2
= 
\sum_{j=1}^\ell (s'_j)^2 \langle y_j, x\rangle^2 + \delta_i \sum_{j=1}^\ell \langle y_j, x \rangle^2
\leq 
\|Q^i x\|^2 + \delta_i.
\]
Subtracting $\|Q^i x\|^2$ from both sides completes the proof.
\end{proof}

\begin{lemma}
For any unit vector $x \in \b{R}^d$ we have $0 \leq \|Ax\|^2 - \|Qx\|^2 \leq \Delta$.
\label{lem:L2normInFinal}
\end{lemma}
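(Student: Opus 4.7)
The plan is to telescope across the $n$ iterations, comparing $\|Q^{i-1}x\|^2$ to $\|Q^i x\|^2$ and summing the per-step errors, which by definition of $\Delta$ should give exactly the $\Delta$ slack we want.

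First I would establish the one-step bridge between $Q^{i-1}$ and $Q^i$ through the intermediate matrices $Q_+^i$ and $C^i$. The key observation is that the last row of $Q^{i-1}$ is all zeros (since $s'_\ell = 0$ by construction at the previous iteration, and $Q^0$ is initialized to zero), so replacing it with $a_i$ contributes cleanly to the squared projection: $\|Q_+^i x\|^2 = \|Q^{i-1} x\|^2 + \langle a_i, x\rangle^2$. Moreover, because $Q_+^i = ZSY^T$ and $C^i = SY^T$ with $Z$ orthonormal, we have $\|Q_+^i x\|^2 = \|C^i x\|^2$. Combining these, $\|C^i x\|^2 = \|Q^{i-1} x\|^2 + \langle a_i, x\rangle^2$.

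Next I would invoke Lemma \ref{lem:L2normInIteration}, which gives $\|C^i x\|^2 - \|Q^i x\|^2 \leq \delta_i$, along with the trivial lower bound $\|Q^i x\|^2 \leq \|C^i x\|^2$ (since $s'_j \leq s_j$ for every $j$). Substituting the expression for $\|C^i x\|^2$, these two inequalities translate into a sandwich:
\[
\langle a_i, x\rangle^2 - \delta_i \;\leq\; \|Q^i x\|^2 - \|Q^{i-1} x\|^2 \;\leq\; \langle a_i, x\rangle^2.
\]

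Finally I would telescope this sandwich from $i=1$ to $i=n$, using $Q^0 = 0$ and $Q^n = Q$, together with $\sum_{i=1}^n \langle a_i, x\rangle^2 = \|Ax\|^2$ and $\sum_{i=1}^n \delta_i = \Delta$. The upper bound of the sandwich yields $\|Qx\|^2 \leq \|Ax\|^2$, giving the nonnegativity $0 \leq \|Ax\|^2 - \|Qx\|^2$. The lower bound of the sandwich yields $\|Qx\|^2 \geq \|Ax\|^2 - \Delta$, giving $\|Ax\|^2 - \|Qx\|^2 \leq \Delta$.

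The main thing to be careful about is verifying the invariant that the last row of $Q^{i-1}$ is zero entering each iteration; everything else is a clean telescoping argument on top of the previously proven Lemma \ref{lem:L2normInIteration} and the SVD identity $\|Q_+^i x\| = \|C^i x\|$. No additional machinery beyond the orthonormality of $Z$ and $Y$ is needed.
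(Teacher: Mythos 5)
Your proof is correct and takes essentially the same route as the paper: the identity $\|C^i x\|^2 = \|Q^{i-1}x\|^2 + \langle a_i, x\rangle^2$ (justified by the zero last row of $Q^{i-1}$), combined with Lemma \ref{lem:L2normInIteration} for the upper bound and $\|Q^i x\|^2 \leq \|C^i x\|^2$ for the lower bound, then telescoped over $i$. Your write-up is if anything slightly more careful, since you explicitly verify the last-row-is-zero invariant and package both directions into a single per-step sandwich before summing.
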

\begin{proof}
Notice that $\|C^i x \|^2 = \|Q^{i-1} x \|^2 + \|a_i x\|^2$ for all $2 \leq i \leq n$ and that $\|Q^1 x\|^2 = \|a_1 x\|^2$. By substituting this into inequality from Lemma \ref{lem:L2normInIteration}, we get
\[
\|Q^{i-1} x \|^2 + \|a_i x\|^2 \leq \|Q^i x\|^2 + \delta_i
\]
Subtracting $\|Q^{i-1} x\|^2$ from both sides and summing over $i$ reveals
\[
\|Ax\|^2 = \sum_{i=1}^n \|a_i x\|^2 \leq \sum_{i=1}^n (\|Q^i x\|^2 - \|Q^{i-1} x\|^2 + \delta_i)
= \|Q^n x\|^2 - \|Q^0 x\|^2 + \sum_{i=1}^n \delta_i = \|Q^n x\|^2 + \Delta.
\]
Subtracting $\|Q^n x\|^2 = \|Q x\|^2$ from both sides proves the second inequality of the lemma.

To see the first inequality observe
$\|Q^{i-1} x\|^2 + \|a_i x\|^2 = \|C^i x\|^2 \leq \|Q^i x\|^2$ for all $1\leq i \leq n$.  
Then we can expand  
\[
\|A x\|^2 = \sum_{i=1}^n \|a_i x\|^2 = \sum_{i=1}^n \|C^i x\|^2 - \|Q^{i-1} x\|^2
\geq
\sum_{i=1}^n \|Q^i x\|^2 - \|Q^{i-1} x\|^2 = \|Q x\|^2. \qedhere
\]
\end{proof}

\begin{lemma}[Liberty~\cite{Lib12}]\label{lem:fD}
Algorithm \ref{alg:freqDir} maintains for any unit vector $x$ that
\[
0 \leq \|Ax\|^2 - \|Qx\|^2 \leq \|A\|_F^2 / \ell
\]
and
\[
T = \Delta \ell = \|A\|_F^2 - \|Q\|_F^2.
\]
\end{lemma}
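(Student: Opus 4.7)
The plan is to deduce both statements from a single per-iteration identity on the Frobenius norm, then telescope and combine with Lemma \ref{lem:L2normInFinal}.

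First I would establish, for each iteration $i$, the identity
\[
\|Q^i\|_F^2 \;=\; \|Q_+\|_F^2 - \ell \delta_i.
\]
This uses the rotational invariance of the Frobenius norm: since $[Z,S,Y] = \svd(Q_+)$ with $Z$ and $Y$ orthonormal, $\|Q_+\|_F^2 = \sum_{j=1}^\ell s_j^2$ and likewise $\|Q^i\|_F^2 = \|S' Y^T\|_F^2 = \sum_{j=1}^\ell (s'_j)^2$. By construction $(s'_j)^2 = s_j^2 - \delta_i$ for each $j \in [\ell]$ (including $j=\ell$ where both sides are zero), so subtracting gives exactly $\ell \delta_i$.

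Next I would connect consecutive iterations by observing that $Q_+$ is obtained from $Q^{i-1}$ by replacing its last (zero) row with $a_i$, so
\[
\|Q_+\|_F^2 \;=\; \|Q^{i-1}\|_F^2 + \|a_i\|^2.
\]
Substituting this into the identity above and telescoping over $i = 1, \ldots, n$, using $Q^0 = 0$, yields
\[
\|Q\|_F^2 \;=\; \|Q^n\|_F^2 \;=\; \sum_{i=1}^n \|a_i\|^2 - \ell \sum_{i=1}^n \delta_i \;=\; \|A\|_F^2 - \ell \Delta,
\]
which is exactly the second claim $\ell \Delta = \|A\|_F^2 - \|Q\|_F^2$.

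Finally, since $\|Q\|_F^2 \geq 0$, the above gives $\Delta \leq \|A\|_F^2 / \ell$. Combining with the bound $0 \leq \|Ax\|^2 - \|Qx\|^2 \leq \Delta$ from Lemma \ref{lem:L2normInFinal} proves the first inequality. The only substantive step is the per-iteration Frobenius identity; once one notices that the SVD bases contribute no Frobenius mass, the rest is bookkeeping. No real obstacle arises, since the zero last row of $Q^{i-1}$ makes the per-step accounting exact rather than an inequality.
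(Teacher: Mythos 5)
Your proof is correct and follows essentially the same route as the paper: the per-iteration Frobenius identity $\|C^i\|_F^2 = \|Q^i\|_F^2 + \ell\delta_i$ (you phrase it via $Q_+$, which has the same Frobenius norm as $C^i = SY^T$ since $Z$ is orthonormal), the observation $\|Q_+\|_F^2 = \|Q^{i-1}\|_F^2 + \|a_i\|^2$, telescoping, and then $\|Q\|_F^2 \geq 0$ combined with Lemma \ref{lem:L2normInFinal}. No gaps.
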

\begin{proof}
In the $i$th round of the algorithm $\|C^i\|^2_F = \|Q^i\|^2_F + \ell \delta_i$ and $\|C^i \|^2_F = \|Q^{i-1} \|^2_F + \|a_i\|^2$.  By solving for $\|a_i\|^2$ and summing over $i$ we get
\[
\|A\|^2_F = \sum_{i=1}^n \|a_i\|^2 = \sum_{i=1}^n \|Q^i\|^2_F - \|Q^{i-1}\|_F^2 + \ell \delta_i = 
\|Q\|^2_F + \ell \Delta.
\]
This proves the second part of the lemma.  Using that $\|Q\|_F^2 \geq 0$ we obtain $\Delta \leq \|A\|_F^2 /\ell$.  Substituting this into Lemma \ref{lem:L2normInFinal} yields $0 \leq \|A x\|^2 - \|Q x\|^2 \leq \|A\|_F^2 / \ell$.    
\end{proof}

\section{New Relative Error Bounds for Frequent Directions}
\label{sec:relative}
We now generalize the relative error type bounds for Misra-Gries (in Section \ref{sec:RelMG}) to the Frequent Directions algorithm (in Section \ref{sec:FreqDir}).  

Before we proceed with the analysis of the algorithm, we specify some parameters and slightly modify Algorithm \ref{alg:freqDir}.  We always set $\ell = \lceil k + k/\eps \rceil$.  Also, instead of returning $Q$ in Algorithm \ref{alg:freqDir}, as described by Liberty, we return $Q_k$.  Here $Q_k$ is the best rank $k$ approximation of $Q$ and can be written $Q_k = S_k' Y_k^T$ where $S'_k$ and $Y_k$ are the first $k$ rows of $S'$ and $Y$, respectively.
Note that $Y = [y_1, \ldots, y_\ell]$ are the right singular vectors of $Q$.

This way $Q_k$ is also rank $k$ (and size $k \times d$), and will have nice approximation properties to $A_k$.  
Recall that $A_k = U_k \Sigma_k V_k^T$ where $[U,\Sigma,V] = \svd(A)$ and $U_k$, $\Sigma_k$, $V_k$ are the first $k$ columns of these matrices, representing the first $k$ principal directions.  
Let $V = [v_1, \ldots, v_d]$ be the right singular vectors of $A$.  
\begin{lemma}
$\Delta \leq \|A - A_k\|_F^2 / (\ell -k)$.  
\label{lem:Dbound}
\end{lemma}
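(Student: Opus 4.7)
My plan is to mimic the strategy of Lemma \ref{lem:Tbound}: there, each of the $r$ decrements loses at least $\ell-k$ units of mass from outside the top $k$ heavy hitters, so the total decrement budget is bounded by the tail $R_k$. Here $\Delta$ plays the role of $r$ and $\|A-A_k\|_F^2$ plays the role of $R_k$, and the algebraic certificate that mass outside the top $k$ directions must be spent comes from two ingredients already proven: the directional inequality in Lemma \ref{lem:L2normInFinal} and the Frobenius identity in Lemma \ref{lem:fD}.

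Concretely, I would first apply Lemma \ref{lem:L2normInFinal} to each of the top $k$ right singular vectors $v_1,\ldots,v_k$ of $A$, so that $\|Av_j\|^2 - \|Qv_j\|^2 \leq \Delta$ for each $j$, and then sum over $j=1,\ldots,k$ to get
\[
\sum_{j=1}^k \|Av_j\|^2 - \sum_{j=1}^k \|Qv_j\|^2 \leq k\Delta.
\]
The first sum is exactly $\|A_k\|_F^2 = \|A\|_F^2 - \|A-A_k\|_F^2$. For the second sum, let $V_k=[v_1,\ldots,v_k]$; then $\sum_j \|Qv_j\|^2 = \|QV_k\|_F^2 = \|\pi_{V_k}(Q)\|_F^2$. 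By Eckart--Young, $Q_k$ is the rank-$k$ orthogonal projection of $Q$ that maximizes $\|\pi_B(Q)\|_F$ over all rank-$k$ subspaces $B$, so $\|\pi_{V_k}(Q)\|_F^2 \leq \|Q_k\|_F^2$. Substituting gives the key inequality
\[
\|A_k\|_F^2 - \|Q_k\|_F^2 \;\leq\; k\Delta,
\quad\text{i.e.,}\quad
\|A\|_F^2 - \|Q_k\|_F^2 \;\leq\; k\Delta + \|A-A_k\|_F^2.
\]

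Next I would invoke Lemma \ref{lem:fD} to get $\|A\|_F^2 - \|Q\|_F^2 = \ell\Delta$. Since $Q_k$ is the best rank-$k$ approximation of $Q$, clearly $\|Q_k\|_F^2 \leq \|Q\|_F^2$, so $\|A\|_F^2 - \|Q_k\|_F^2 \geq \ell\Delta$. Combining with the inequality from the previous step,
\[
\ell\Delta \;\leq\; \|A\|_F^2 - \|Q_k\|_F^2 \;\leq\; k\Delta + \|A - A_k\|_F^2,
\]
and rearranging yields $(\ell-k)\Delta \leq \|A-A_k\|_F^2$, which is the claim.

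The only mildly non-obvious step is the Eckart--Young bound $\sum_{j=1}^k \|Qv_j\|^2 \leq \|Q_k\|_F^2$, since the $v_j$ are singular vectors of $A$, not of $Q$; but this is immediate from the fact that $Q_k$ maximizes $\|\pi_B(Q)\|_F$ over all rank-$k$ subspaces $B$, and projection onto $V_k$ is one such competitor. Everything else is a one-line substitution, so I expect no real obstacle.
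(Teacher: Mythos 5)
Your proof is correct and follows essentially the same route as the paper's: both apply Lemma \ref{lem:L2normInFinal} to the top $k$ right singular vectors $v_1,\ldots,v_k$ of $A$ and combine with the identity $\|A\|_F^2 - \|Q\|_F^2 = \ell\Delta$ from Lemma \ref{lem:fD} to arrive at $\ell\Delta \leq \|A-A_k\|_F^2 + k\Delta$. The only difference is cosmetic: the paper bounds $\sum_{i=1}^k \|Qv_i\|^2 \leq \|Q\|_F^2$ directly, whereas you take a small detour through $\|Q_k\|_F^2$ via Eckart--Young and then use $\|Q_k\|_F^2 \leq \|Q\|_F^2$; both chains are valid.
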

\begin{proof}
Recall that $T = \Delta \ell = \|A\|_F^2 - \|Q\|_F^2$ is the total squared norm subtracted from all of any set of orthogonal directions throughout the algorithm.
Now if $r = \rank(A)$ we have:
\begin{align*}
T 
&= 
\|A\|_F^2 - \|Q\|_F^2 &\text{By Lemma \ref{lem:fD}}
\\ &= 
\sum_{i=1}^k \|A v_i\|^2 + \sum_{i=k+1}^r \|A v_i\|^2 - \|Q\|_F^2 & 
\\&= 
\sum_{i=1}^k \|A v_i\|^2 + \|A - A_k \|_F^2 - \|Q\|_F^2 &
\\ &\leq 
\sum_{i=1}^k \|A v_i\|^2 + \|A - A_k \|_F^2 - \sum_{i=1}^k \|Q v_i\|^2 & \text{$\sum_{i=1}^k \|Q v_i\|^2 < \|Q\|_F^2$}
\\&= 
\|A - A_k \|_F^2 + \sum_{i=1}^k \left (\|A v_i\|^2 - \|Q v_i\|^2 \right) & 
\\&\leq 
\|A - A_k \|_F^2 + k\Delta & \text{By Lemma \ref{lem:L2normInFinal} $\|A v_i\|^2 - \|Q v_i\|^2 \leq \Delta$}
\end{align*}  

Now we solve for $T = \Delta \ell \leq \|A - A_k\|_F^2 + k \Delta$ to get
$\Delta \leq \|A - A_k\|_F^2 / (\ell - k)$.  
\end{proof}

Now we can show that projecting $A$ onto $Q_k$ provides a relative error approximation.  

\begin{lemma} 
$\|A - \pi_{Q_k}(A)\|_F^2 \leq (1+\eps) \|A - A_k\|_F^2.$
\end{lemma}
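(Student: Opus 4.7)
The plan is to bound $\|A-\pi_{Q_k}(A)\|_F^2$ by the intermediate quantity $\|A\|_F^2 - \|Q_k\|_F^2$, and then bound that in turn by $(1+\eps)\|A-A_k\|_F^2$ using Lemma \ref{lem:L2normInFinal} applied to the top-$k$ right singular vectors of $A$, together with Lemma \ref{lem:Dbound}.

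First, since $\pi_{Q_k}(A)$ is the projection of $A$ onto the $k$-dimensional rowspace of $Q_k$, the Pythagorean identity gives $\|A-\pi_{Q_k}(A)\|_F^2 = \|A\|_F^2 - \|\pi_{Q_k}(A)\|_F^2$. The rowspace of $Q_k$ is spanned by the top-$k$ right singular vectors $y_1,\ldots,y_k$ of $Q$, so $\|\pi_{Q_k}(A)\|_F^2 = \sum_{i=1}^k \|Ay_i\|^2$. Applying the first inequality of Lemma \ref{lem:L2normInFinal} to each $y_i$ yields $\|Ay_i\|^2 \geq \|Qy_i\|^2 = (s_i')^2$, and summing gives $\|\pi_{Q_k}(A)\|_F^2 \geq \|Q_k\|_F^2$. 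Hence $\|A-\pi_{Q_k}(A)\|_F^2 \leq \|A\|_F^2 - \|Q_k\|_F^2$.

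Next I would bound $\|A\|_F^2 - \|Q_k\|_F^2$. Since $Q_k$ is the best rank-$k$ approximation to $Q$, for the top-$k$ right singular vectors $v_1,\ldots,v_k$ of $A$ we have $\|Q_k\|_F^2 = \sum_{i=1}^k \|Qy_i\|^2 \geq \sum_{i=1}^k \|Qv_i\|^2$. Applying the second inequality of Lemma \ref{lem:L2normInFinal}, $\|Qv_i\|^2 \geq \|Av_i\|^2 - \Delta$, and summing over $i=1,\ldots,k$ gives
\[
\|A\|_F^2 - \|Q_k\|_F^2 \;\leq\; \|A\|_F^2 - \sum_{i=1}^k \|Av_i\|^2 + k\Delta \;=\; \|A-A_k\|_F^2 + k\Delta,
\]
where the last equality uses that $\sum_{i=1}^k \|Av_i\|^2 = \|A_k\|_F^2$.

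Finally, I would invoke Lemma \ref{lem:Dbound} which gives $\Delta \leq \|A-A_k\|_F^2/(\ell - k)$. With $\ell = \lceil k + k/\eps\rceil$ we have $\ell - k \geq k/\eps$, so $k\Delta \leq \eps\|A-A_k\|_F^2$, and assembling the chain gives $\|A-\pi_{Q_k}(A)\|_F^2 \leq (1+\eps)\|A-A_k\|_F^2$. No step looks like a real obstacle: the only subtle point is to remember to use \emph{both} directions of Lemma \ref{lem:L2normInFinal} — once on the singular vectors $y_i$ of $Q$ to lower bound the projection energy, and once on the singular vectors $v_i$ of $A$ to compare against the optimal $A_k$.
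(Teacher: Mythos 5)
Your proof is correct and follows essentially the same argument as the paper: Pythagorean decomposition, two applications of Lemma \ref{lem:L2normInFinal} (on the $y_i$ and on the $v_i$), the optimality of the $y_i$ for $Q$, and Lemma \ref{lem:Dbound} to finish. The only cosmetic difference is that you name the intermediate quantity $\|A\|_F^2 - \|Q_k\|_F^2$ explicitly, which the paper leaves implicit in its chain of inequalities.
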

\begin{proof}
Using the vectors $v_i$ as right singular vectors of $A$, and letting $r = \rank(A)$, then we have
\begin{align*}
\|A - \pi_{Q_k}(A)\|_F^2 
&= 
\|A\|_F^2 - \|\pi_{Q_k}(A)\|_F^2 = 
\|A\|_F^2 - \sum_{i=1}^k \|A y_i\|^2                                                  & \text{Pythagorean theorem}
\\& \leq 
\|A\|_F^2 - \sum_{i=1}^k\|Q y_i\|_F^2 
& \text{By Lemma \ref{lem:L2normInFinal}}
\\& \leq 
\|A\|_F^2 - \sum_{i=1}^k \|Q v_i\|^2    & \text{Since $\sum_{i=1}^j \|Q y_i\|^2 \geq \sum_{i=1}^j \|Q v_i\|^2$}
\\ &\leq
\|A\|_F^2 - \sum_{i=1}^k (\|A v_i\|^2 - \Delta)  
                                                              & \text{By Lemma \ref{lem:L2normInFinal}}
\\ & = \|A\|_F^2 - \|A_k\|_F^2 + k\Delta &
\\ &\leq
\|A - A_k\|_F^2 + \frac{k}{\ell-k} \|A - A_k\|_F^2 = \frac{\ell}{\ell-k} \|A - A_k\|_F^2  
                                                              & \text{By Lemma \ref{lem:Dbound}} 
\end{align*}
Finally, setting $\ell = \lceil k + k/\eps \rceil$ results in $\|A - \pi_{Q_k}(A)\|_F^2 \leq (1+\eps) \|A - A_k\|_F^2$. 
\end{proof}

We would also like to relate the Frobenius norm of $Q_k$ directly to that of $A_k$, instead of projecting $A$ onto it (which cannot be done in a streaming setting, at least not in $\omega(n)$ space).  However $\|A - Q_k\|_F$ does not make sense since $Q_k$ has a different number of rows than $A$.  
However, we can decompose $\|A - A_k\|_F^2 = \|A\|_F^2 - \|A_k\|_F^2$ since $A_k$ is a projection of $A$ onto a (the best rank $k$) subspace, and we can use the Pythagorean Theorem.  Now we can compare $\|A\|_F^2 - \|A_k\|_F^2$ to $\|A\|_F^2 - \|Q_k\|_F^2$.  

\begin{lemma} \label{lem:QktoAk}
$\|A\|_F^2 - \|A_k\|_F^2 \leq \|A\|_F^2 - \|Q_k\|_F^2 \leq (1+\eps) (\|A\|_F^2 - \|A_k\|_F^2)$.
\end{lemma}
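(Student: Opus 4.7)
The plan is to prove each inequality separately, and observe that both are really statements about the relationship between $\|Q_k\|_F^2$ and $\|A_k\|_F^2$ after subtracting $\|A\|_F^2$. Cancelling $\|A\|_F^2$ on both sides, the left inequality is equivalent to $\|Q_k\|_F^2 \leq \|A_k\|_F^2$, and the right inequality is equivalent to $\|Q_k\|_F^2 \geq \|A_k\|_F^2 - \eps\|A - A_k\|_F^2$. Both will follow by comparing the top-$k$ energies of $Q$ and $A$ along either $\{y_1,\ldots,y_k\}$ or $\{v_1,\ldots,v_k\}$, and then invoking the previous lemmas.

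For the left inequality, I would write $\|Q_k\|_F^2 = \sum_{i=1}^k \|Q y_i\|^2$ since $y_1,\ldots,y_k$ are the top right singular vectors of $Q$. By Lemma \ref{lem:fD}, for every unit vector $x$ we have $\|Q x\|^2 \leq \|A x\|^2$, so $\sum_{i=1}^k \|Qy_i\|^2 \leq \sum_{i=1}^k \|Ay_i\|^2$. Since $\{y_1,\ldots,y_k\}$ is an orthonormal set and $v_1,\ldots,v_k$ are the top-$k$ right singular vectors of $A$, which maximize $\sum_{i=1}^k \|Ax_i\|^2$ over all orthonormal sets, this last sum is at most $\sum_{i=1}^k \|Av_i\|^2 = \|A_k\|_F^2$. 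Chaining gives $\|Q_k\|_F^2 \leq \|A_k\|_F^2$.

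For the right inequality, I would instead lower bound $\|Q_k\|_F^2 = \sum_{i=1}^k \|Qy_i\|^2 \geq \sum_{i=1}^k \|Q v_i\|^2$, using the same maximality property applied now to $Q$ (the top-$k$ directions of $Q$ dominate any orthonormal set, including $\{v_1,\ldots,v_k\}$). Then Lemma \ref{lem:L2normInFinal} gives $\|Qv_i\|^2 \geq \|Av_i\|^2 - \Delta$ for each $i$, so summing yields $\|Q_k\|_F^2 \geq \|A_k\|_F^2 - k\Delta$. Rearranging,
\[
\|A\|_F^2 - \|Q_k\|_F^2 \leq \|A\|_F^2 - \|A_k\|_F^2 + k\Delta = \|A - A_k\|_F^2 + k\Delta.
\]
Finally Lemma \ref{lem:Dbound} gives $\Delta \leq \|A - A_k\|_F^2/(\ell-k)$, so $k\Delta \leq \frac{k}{\ell-k}\|A-A_k\|_F^2$, which is at most $\eps \|A - A_k\|_F^2$ by the choice $\ell = \lceil k + k/\eps \rceil$.

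There is no real obstacle here: the only subtle point is remembering to apply the ``optimal orthonormal basis'' property in two different directions, once for $A$ (bounding above by $v_i$) and once for $Q$ (bounding below by $y_i$), so that in each case we can pair the appropriate basis with the appropriate matrix. Everything else is reassembly of Lemma \ref{lem:fD}, Lemma \ref{lem:L2normInFinal}, and Lemma \ref{lem:Dbound}, along with the Pythagorean identity $\|A - A_k\|_F^2 = \|A\|_F^2 - \|A_k\|_F^2$ already noted in the paragraph preceding the lemma.
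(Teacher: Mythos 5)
Your proof is correct and follows essentially the same route as the paper's: the left inequality via the chain $\|Q_k\|_F^2 = \sum_{i=1}^k\|Qy_i\|^2 \leq \sum_{i=1}^k\|Ay_i\|^2 \leq \sum_{i=1}^k\|Av_i\|^2 = \|A_k\|_F^2$, and the right inequality by swapping to the $v_i$ basis, applying Lemma \ref{lem:L2normInFinal} to get the $k\Delta$ term, and finishing with Lemma \ref{lem:Dbound} and the choice of $\ell$. The only difference is cosmetic (you rearrange to bound $\|Q_k\|_F^2$ from below rather than $\|A\|_F^2 - \|Q_k\|_F^2$ from above), so no further comment is needed.
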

\begin{proof}
The first inequality can be seen since
\[
\|A_k\|_F^2 = \sum_{i=1}^k \|A v_i\|^2 \geq \sum_{i=1}^k \|A y_i\|^2 \geq \sum_{i=1}^k \|Q y_i\|^2 = \|Q_k\|_F^2.
\]
And the second inequality follows by
\begin{align*}
\|A\|_F^2 - \|Q_k\|_F^2
&= 
\|A\|_F^2 - \sum_{i=1}^k \|Q y_i\|^2
\\ & \leq
\|A\|_F^2 - \sum_{i=1}^k \|Q v_i\|^2
\\ & \leq
\|A\|_F^2 - \sum_{i=1}^k (\|A v_i\|^2 - \Delta) = \|A\|_F^2 - \|A_k\|_F^2 + k \Delta
\\ & \leq
\|A - A_k\|_F^2 + \frac{k}{\ell-k} \|A - A_k\|_F^2 = \frac{\ell}{\ell-k} \|A - A_k\|_F^2 
\end{align*}
Finally, setting $\ell = k + k/\eps$ results in $\|A\|_F^2 - \|Q_k\|_F^2 \leq (1+\eps) \|A - A_k\|_F^2 = (1+\eps)(\|A\|_F^2 - \|A_k\|_F^2)$. 
\end{proof}

One may ask why not compare $\|A_k\|_F$ to $\|Q_k\|_F$ directly, instead of subtracting from $\|A\|_F^2$.  
First note that the above bound \emph{does} guarantee that $\|A_k\|_F \geq \|Q_k\|_F$.  
Second, in situations where a rank $k$ approximation is interesting, then most of the mass from $A$ should be in its top $k$ components.  Then $\|A_k\|_F>\|A - A_k\|_F$ so the above bound is actually tighter.  To demonstrate this we can state the following conditional statement comparing $\|A_k\|_F$ and $\|Q_k\|_F$.  

\begin{lemma}
If $\|A-A_k\|_F \leq \|A_k\|_F$, then 
\[
(1-\eps)\|A_k\|_F^2 \leq \|Q_k\|_F^2 \leq \|A_k\|_F^2.
\]
\end{lemma}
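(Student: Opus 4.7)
The plan is to deduce both inequalities directly from Lemma \ref{lem:QktoAk} together with the Pythagorean identity $\|A\|_F^2 = \|A_k\|_F^2 + \|A - A_k\|_F^2$ (valid because $A_k$ is the orthogonal projection of $A$ onto its top-$k$ right singular subspace). The hypothesis $\|A - A_k\|_F \leq \|A_k\|_F$ will be needed only for the lower bound.

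For the upper bound $\|Q_k\|_F^2 \leq \|A_k\|_F^2$, I would simply take the left-hand inequality of Lemma \ref{lem:QktoAk}, namely $\|A\|_F^2 - \|A_k\|_F^2 \leq \|A\|_F^2 - \|Q_k\|_F^2$, and cancel $\|A\|_F^2$ from both sides. This part does not need the assumption on $\|A - A_k\|_F$.

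For the lower bound, I would start from the right-hand inequality of Lemma \ref{lem:QktoAk}, which rearranges to $\|Q_k\|_F^2 \geq \|A\|_F^2 - (1+\eps)\|A - A_k\|_F^2$. Substituting the Pythagorean identity on the right-hand side gives $\|Q_k\|_F^2 \geq \|A_k\|_F^2 - \eps \|A - A_k\|_F^2$. Now invoking the hypothesis $\|A - A_k\|_F^2 \leq \|A_k\|_F^2$ upgrades this to $\|Q_k\|_F^2 \geq \|A_k\|_F^2 - \eps \|A_k\|_F^2 = (1-\eps)\|A_k\|_F^2$, as required.

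There is no real obstacle here: once Lemma \ref{lem:QktoAk} is in hand, the statement is just a two-line rearrangement whose purpose is to reinterpret the additive-in-$\|A - A_k\|_F^2$ error as a multiplicative-in-$\|A_k\|_F^2$ error under the (very natural) regime in which a rank-$k$ approximation is informative. The only thing worth being careful about is invoking the Pythagorean identity explicitly, since the proof trades an $\|A\|_F^2 - \|A_k\|_F^2$ term for $\|A - A_k\|_F^2$ in exactly the spot where the hypothesis then kicks in.
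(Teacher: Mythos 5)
Your proof is correct. The upper bound is obtained exactly as in the paper: cancel $\|A\|_F^2$ in the first inequality of Lemma \ref{lem:QktoAk}. For the lower bound you take a mildly different but equivalent route: the paper re-runs the singular-vector argument, bounding $\|A_k\|_F^2 = \sum_{i=1}^k \|Av_i\|^2 \leq \|Q_k\|_F^2 + k\Delta$ via Lemma \ref{lem:L2normInFinal} and then invoking Lemma \ref{lem:Dbound} and the hypothesis, whereas you simply rearrange the second inequality of Lemma \ref{lem:QktoAk} and substitute the Pythagorean identity $\|A\|_F^2 - \|A_k\|_F^2 = \|A - A_k\|_F^2$ to land on $\|Q_k\|_F^2 \geq \|A_k\|_F^2 - \eps\|A - A_k\|_F^2$ before applying the hypothesis. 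The two derivations produce the identical intermediate inequality, so nothing is gained or lost logically; yours is slightly more economical in that it treats the statement purely as a corollary of Lemma \ref{lem:QktoAk} rather than going back to $\Delta$, at the cost of leaning on the (true, and used elsewhere in the paper) identity $\|A - A_k\|_F^2 = \|A\|_F^2 - \|A_k\|_F^2$, which you correctly flag as the one step worth stating explicitly.
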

\begin{proof}

The second inequality follows from Lemma \ref{lem:QktoAk}, by subtracting $\|A\|_F^2$.  
The first inequality uses Lemma \ref{lem:L2normInFinal}.  
\begin{align*}
\|A_k\|_F^2 &= \sum_{i=1}^k \|A v_i\|^2 
\leq
\sum_{i=1}^k (\|Q v_i\|^2 + \Delta) 
\leq
\|Q_k\|_F^2 + k\Delta
\\ &\leq
\|Q_k\|_F^2 + \frac{k}{\ell-k} \|A - A_k\|_F^2 
\leq \|Q_k\|_F^2 + \eps \|A_k\|_F^2.  \qedhere
\end{align*}
\end{proof}

Finally, we summarize all of our bounds about Algorithm \ref{alg:freqDir}.  

\begin{theorem}
Given an input $n \times d$ matrix $A$, by setting $\ell = \lceil k + k/\eps \rceil$
Algorithm \ref{alg:freqDir} runs in time $O(n d \ell^2) = O(ndk^2/\eps^2)$ time and produces an $\ell \times d$ matrix $Q$ that for any unit vector $x \in \b{R}^d$ satisfies
\[
0 \leq \|Ax\|^2 - \|Q x\|^2 \leq \|A\|_F^2/\ell
\]
and the projection of $Q$ along its top $k$ right singular values is a $k \times d$ matrix $Q_k$ which satisfies
\[
\|A - \pi_{Q_k}(A)\|_F^2  \leq (1+\eps)\|A - A_k\|_F^2
\]  
and
\[
\|A\|_F^2 - \|A_k\|_F^2 \leq \|A\|_F^2 - \|Q_k\|_F^2 \leq (1+\eps) (\|A\|_F^2 - \|A_k\|_F^2).
\]
\end{theorem}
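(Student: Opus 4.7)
The plan is to assemble this theorem purely as a bundling result: every quantitative bound in the statement has already been proved as a standalone lemma, so my job is just to cite them in the right order and handle the runtime separately.

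First I would dispatch the three approximation inequalities. The bound $0 \leq \|Ax\|^2 - \|Qx\|^2 \leq \|A\|_F^2/\ell$ for arbitrary unit $x$ is exactly Lemma \ref{lem:fD}, and holds for any setting of $\ell$ with no special use of $\ell = \lceil k + k/\eps\rceil$. The projection bound $\|A - \pi_{Q_k}(A)\|_F^2 \leq (1+\eps)\|A - A_k\|_F^2$ is the unnumbered lemma immediately following Lemma \ref{lem:Dbound}, applied at $\ell = \lceil k + k/\eps\rceil$ so the prefactor $\ell/(\ell-k)$ collapses to $1+\eps$. The two-sided bound sandwiching $\|A\|_F^2 - \|Q_k\|_F^2$ between $\|A\|_F^2 - \|A_k\|_F^2$ and $(1+\eps)(\|A\|_F^2 - \|A_k\|_F^2)$ is Lemma \ref{lem:QktoAk}, again with the same choice of $\ell$. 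There is genuinely nothing to prove here beyond the citations; all of the heavy lifting — comparing $\sum_i \|Ay_i\|^2$ to $\sum_i \|Av_i\|^2$, bounding $\Delta$ via Lemma \ref{lem:Dbound}, and invoking Lemma \ref{lem:L2normInFinal} — has been done.

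The only new component is the runtime. I would argue it as follows. At iteration $i$, the algorithm forms the $\ell \times d$ matrix $Q_+$ by overwriting the last (zero) row of $Q^{i-1}$ with $a_i$, then performs one thin SVD on this $\ell \times d$ matrix. A thin SVD of an $\ell \times d$ matrix with $\ell \leq d$ takes $O(d\ell^2)$ time. The remaining per-iteration operations — shifting the singular values by $\delta_i$, and multiplying $S'$ into $Y^T$ — each cost $O(d\ell)$. Summing over $n$ iterations yields $O(nd\ell^2)$ total time. Substituting $\ell = \lceil k + k/\eps \rceil = O(k/\eps)$ gives $O(ndk^2/\eps^2)$, matching the stated bound.

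There is no real obstacle; the closest thing to one is simply making sure the choice $\ell = \lceil k+k/\eps\rceil$ is consistent across the three inequalities and the runtime claim, and noting that the leftmost inequality of the $\|A\|_F^2 - \|Q_k\|_F^2$ sandwich does not require the $\ell = \lceil k + k/\eps \rceil$ choice (it follows from $\|Q_k\|_F \leq \|A_k\|_F$, already shown in Lemma \ref{lem:QktoAk}). Once those bookkeeping details are in place, the proof is a one-line appeal to each of the three earlier lemmas plus the runtime computation.
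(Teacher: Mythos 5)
Your proposal is correct and matches the paper exactly: the theorem is stated as a summary with no separate proof, since each inequality is precisely Lemma \ref{lem:fD}, the unnumbered projection lemma following Lemma \ref{lem:Dbound}, and Lemma \ref{lem:QktoAk} respectively, and the $O(nd\ell^2)$ runtime is just $n$ rounds of an $O(d\ell^2)$ thin SVD. Your bookkeeping observations (which inequalities actually need $\ell = \lceil k + k/\eps\rceil$) are accurate but not required.
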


Liberty~\cite{Lib12} also observes that by increasing $\ell$ by a constant $c > 1$ and then processing every $\ell (c - 1)$ elements in a batch setting (each round results in a $c \ell$ row matrix $Q$) then the runtime can be reduced to $O(\frac{c^2}{c-1} n d \ell) = O(nd k/\eps)$ at the expense of more space.  The same trick can be applied here to use $\ell = c \lceil k + 2k/\eps \rceil$ rows in total $O(ndk/\eps)$ time.

\section{No Sparse Frequent Directions}
\label{sec:Sparse Frequent Directions}
In this section we consider extending the frequent directions algorithm described in the previous section to a sparse version.  The specific goal is to retain a (re-weighted) set of $\ell$ rows $Q$ of an input matrix $A$ so that for any unit vector $x \in \b{R}^d$ that
$0 \leq \|Ax\|^2 - \|Qx\|^2 \leq \|A\|_F^2 /\ell$, and also to hopefully extend this so that
$\|A - \pi_{Q_k}(A)\|_F^2 \leq (1+\eps)\|A - A_k\|_F^2$ as above.  
This is an open problem left by Liberty~\cite{Lib12}.  It is also a useful goal in many scenarios when returning a set of $k$ singular vectors of a matrix that are linear combinations of inputs are hard to interpret; in this case returning a weighted set of actual rows is much more informative.  

In this section we show that this is not possible by extending the frequent directions algorithm.  

In particular we consider processing one row in the above framework.  The input to the problem is an $\ell \times d$ matrix $Q = [w_1 \bar r_1; \ldots; w_\ell \bar r_\ell]$ where each $w_j$ is a scalar (initially set to $w_j = \|r_j\|$).  
The output of one round should be an ${\ell-1} \times d$ matrix $\hat Q = [\hat w_1 \bar r_1; \ldots ; \hat w_{j-1} \bar r_{j-1}; \hat w_{j+1} \bar r_{j+1}; \ldots; \hat w_{\ell} \bar r_{\ell}]$ where one of the rows, namely $r_j$, is removed and the rest of the rows are re-weighted.  

\paragraph{Requirements.}
To make this process work we need the following requirements.  
Let $\delta_i = \min_j \|\op_{Q_{-j}}(Q)\|^2$ represent the smallest amount of squared norm resulting from removing one row from $Q$ by the procedure above.  
\footnote{Define $\op_X(Y)$ as the \emph{orthogonal projection} of $Y$ onto $X$.  It projects each row of $Y$ onto the subspace orthogonal to the basis of $X$.  It can be interpreted as $\op_X(Y) = Y - \pi_X(Y)$.  Also, we let $Q_{-j}$ be the matrix $Q$ after removing the $j$th row.}
Assume we remove this row, although removing any row is just as difficult but would create even more error.    
\begin{itemize} \denselist
 \item[(P1)] 
The Frobenius norm $\|\hat Q\|_F^2$ must be reduced so 
$
\|\hat Q\|_F^2 \leq \|Q\|_F^2 - c \ell \delta_i 
$
for some absolute constant $c$.  
\\
The larger the constant (ideally $c=1$) the smaller the bound on $\ell$.  
 \item[(P2)] 
For any unit vector (direction) $x \in \b{R}^d$ the difference in norms between $\hat Q$ and $Q$ must be bounded as
$
\|Q x\|^2 \leq \|\hat Q x \|^2 + \delta_i.
$
\\
In the direction $v$ which defines the norm $\delta_i = \|\op_{Q_{-j}}(Q)\|^2 = \|Q v\|^2$ we have $0 = \|\hat Q v\|^2$ and the inequality is tight.  And for instance a vector $u$ in the span of $Q_{-j}$ we have that $\|Qu\|^2 = \|\hat Q u\|^2$, which makes the right hand side larger.  
\end{itemize}

If both (P1) and (P2) hold, then we can run this procedure for all rows of $A$ and obtain a final matrix $Q$.  Using similar analysis as in Section \ref{sec:FreqDir}, for any unit vector $x \in \b{R}^d$ we can show
\[
\|Ax\|^2 - \|Q x\|^2 \leq \sum_i \delta_i \leq \|A\|^2_F / (c \ell).
\]

\paragraph{Hard construction.}
Consider a $\ell \times d$ matrix $Q$ with $d > \ell$.  
Let each row of $Q$ be of the form $r_j = [1, 0, 0, \ldots, 0, 1, 0, \ldots, 0]$ where there is always a $1$ in the first column, and another in the $(j+1)$th column for row $j$; the remaining entries are $0$.  
Let $x = [1, 0, 0, \ldots, 0]$ be the direction strictly along the dimension represented by the first column.  

Now $\delta_i = \min_j \|\op_{Q_{-j}}(Q)\|^2 = 1$, since for any $j$th row $r_j$, when doing an orthogonal projection to $Q_{-j}$ the remaining vector is always exactly in the $j$th column where that row has a squared norm of $1$.  
For notational simplicity, lets assume we choose to remove row $\ell$.  

We now must re-weight rows $r_1$ though $r_{\ell-1}$; let the new weights be $\hat w_j^2 = w_j^2 - \alpha_j$.  

In order to satisfy (P1) we must have 
\[
\|Q\|_F^2 - \|\hat Q\|_F^2 
= 
\sum_{j=1}^{\ell} w_j^2 - \sum_{j=1}^{\ell-1} \hat w_j^2
=
w_{\ell}^2 + \sum_{j=1}^{\ell-1} \alpha_j
\geq
c \ell \delta_i.
\]
Since $w_{\ell}^2 = 2$ and $\delta_i = 1$ we must have 
$\sum_{j=1}^{\ell-1} \alpha_j \geq c \ell-2$.  

In order to satisfy (P2) we consider the vector $x$ as defined above.  
We can observe
\[
\|Q x\|^2 = \sum_{j=1}^{\ell} w_j^2 \langle \bar r_j, x \rangle^2 = \sum_{j=1}^{\ell} w_j^2 (1/2).
\]
and
\[
\|\hat Q x\|^2 = \sum_{j=1}^{\ell-1} \hat w_j^2 \langle \bar r_j, x\rangle^2 = (1/2) \sum_{j=1}^{\ell-1} (w_j^2 - \alpha_j) = 
(\|Q x\|^2 - 1) - (1/2) \sum_{j=1}^{\ell-1} \alpha_j.
\]
Thus we require that $\sum_{j=1}^{\ell-1} \alpha_j \leq 2 \delta_i - 2 = 0$, since recall $\delta_i = 1$.  

Combining these requirements yields that 
\[
0 \geq \sum_{j=1}^{\ell-1} \alpha_j \geq c \ell-2
\]
which is only valid when $c \leq 2/\ell$.  

Applying the same proof technique as in Section \ref{sec:FreqDir} to this process reveals, at best, a bound so that for any direction $x \in \b{R}^d$ we have
\[
\|A x\|^2 - \|Q x\|^2 \leq \|A\|_F^2/2. 
\]

\subsection*{Acknowledgements}
We thank Edo Liberty for encouragement and helpful comments, including pointing out several mistakes in an earlier version of this paper.  And thank David P. Woodruff, Christos Boutsidis, Dan Feldman, and Christian Sohler for helping interpret some results.  

\bibliographystyle{plain}
\bibliography{mina,discrepancy}

\begin{thebibliography}{10}

\bibitem{achlioptas2001fast}
Dimitris Achlioptas and Frank McSherry.
\newblock Fast computation of low rank matrix approximations.
\newblock In {\em Proceedings of the 33rd Annual ACM Symposium on Theory of
  Computing}. ACM, 2001.

\bibitem{ACHPWY12}
Pankaj~K. Agarwal, Graham Cormode, Zengfeng Huang, Jeff~M. Phillips, Zhewei
  Wei, and Ke~Yi.
\newblock Mergeable summaries.
\newblock In {\em Proceedings 31st ACM Symposium on Principals of Database
  Systems}, pages 23--34, 2012.

\bibitem{BCIS09}
Radu Berinde, Graham Cormode, Piotr Indyk, and Martin~J. Strauss.
\newblock Space-optimal heavy hitters with strong error bounds.
\newblock In {\em Proceedings ACM Symposium on Principals of Database Systems},
  2009.

\bibitem{boutsidis2011near}
Christos Boutsidis, Petros Drineas, and Malik Magdon-Ismail.
\newblock Near optimal column-based matrix reconstruction.
\newblock In {\em Foundations of Computer Science, 2011 IEEE 52nd Annual
  Symposium on}, pages 305--314. IEEE, 2011.

\bibitem{brand2002incremental}
Matthew Brand.
\newblock Incremental singular value decomposition of uncertain data with
  missing values.
\newblock In {\em Computer Vision---ECCV 2002}, pages 707--720. Springer, 2002.

\bibitem{clarkson2009numerical}
Kenneth~L. Clarkson and David~P. Woodruff.
\newblock Numerical linear algebra in the streaming model.
\newblock In {\em Proceedings of the 41st Annual ACM Symposium on Theory of
  Computing}, pages 205--214. ACM, 2009.

\bibitem{clarkson2013low}
Kenneth~L. Clarkson and David~P. Woodruff.
\newblock Low rank approximation and regression in input sparsity time.
\newblock In {\em Proceedings of the 45th Annual ACM Symposium on Symposium on
  Theory of Computing}, pages 81--90. ACM, 2013.

\bibitem{Cor13}
Graham Cormode.
\newblock The continuous distributed monitoring model.
\newblock {\em SIGMOD Record}, 42, 2013.

\bibitem{demaine2002frequency}
Erik~D Demaine, Alejandro L{\'o}pez-Ortiz, and J.~Ian Munro.
\newblock Frequency estimation of internet packet streams with limited space.
\newblock In {\em Algorithms---ESA 2002}, pages 348--360. Springer, 2002.

\bibitem{deshpande2006adaptive}
Amit Deshpande and Santosh Vempala.
\newblock Adaptive sampling and fast low-rank matrix approximation.
\newblock In {\em Approximation, Randomization, and Combinatorial Optimization.
  Algorithms and Techniques}, pages 292--303. Springer, 2006.

\bibitem{drineas2003pass}
Petros Drineas and Ravi Kannan.
\newblock Pass efficient algorithms for approximating large matrices.
\newblock In {\em Proceedings of the 14th Annual ACM-SIAM Symposium on Discrete
  algorithms}. Society for Industrial and Applied Mathematics, 2003.

\bibitem{drineas2006fast1}
Petros Drineas, Ravi Kannan, and Michael~W. Mahoney.
\newblock Fast monte carlo algorithms for matrices i: Approximating matrix
  multiplication.
\newblock {\em SIAM Journal on Computing}, 36(1):132--157, 2006.

\bibitem{drineas2006fast2}
Petros Drineas, Ravi Kannan, and Michael~W. Mahoney.
\newblock Fast monte carlo algorithms for matrices ii: Computing a low-rank
  approximation to a matrix.
\newblock {\em SIAM Journal on Computing}, 36(1):158--183, 2006.

\bibitem{drineas2006fast3}
Petros Drineas, Ravi Kannan, and Michael~W. Mahoney.
\newblock Fast monte carlo algorithms for matrices iii: Computing a compressed
  approximate matrix decomposition.
\newblock {\em SIAM Journal on Computing}, 36(1):184--206, 2006.

\bibitem{drineas2008relative}
Petros Drineas, Michael~W. Mahoney, and S.~Muthukrishnan.
\newblock Relative-error cur matrix decompositions.
\newblock {\em SIAM Journal on Matrix Analysis and Applications},
  30(2):844--881, 2008.

\bibitem{FSS13}
Dan Feldman, Melanie Schmidt, and Christian Sohler.
\newblock Turning big data into tiny data: Constant-size coresets for k-means,
  {PCA} and projective clustering.
\newblock In {\em Proceedings ACM-SIAM Symposium on Discrete Algorithms}, 2013.

\bibitem{frieze2004fast}
Alan Frieze, Ravi Kannan, and Santosh Vempala.
\newblock Fast monte-carlo algorithms for finding low-rank approximations.
\newblock {\em Journal of the ACM (JACM)}, 51(6):1025--1041, 2004.

\bibitem{golab2003identifying}
Lukasz Golab, David DeHaan, Erik~D. Demaine, Alejandro Lopez-Ortiz, and J.~Ian
  Munro.
\newblock Identifying frequent items in sliding windows over on-line packet
  streams.
\newblock In {\em Proceedings of the 3rd ACM SIGCOMM Conference on Internet
  Measurement}. ACM, 2003.

\bibitem{golub2012matrix}
Gene~H Golub and Charles~F Van~Loan.
\newblock {\em Matrix computations}, volume~3.
\newblock JHUP, 2012.

\bibitem{hall1998incremental}
Peter Hall, David Marshall, and Ralph Martin.
\newblock Incremental eigenanalysis for classification.
\newblock In {\em British Machine Vision Conference}, volume~1. Citeseer, 1998.

\bibitem{karp2003simple}
Richard~M. Karp, Scott Shenker, and Christos~H. Papadimitriou.
\newblock A simple algorithm for finding frequent elements in streams and bags.
\newblock {\em ACM Transactions on Database Systems}, 28(1):51--55, 2003.

\bibitem{levey2000sequential}
A~Levey and Michael Lindenbaum.
\newblock Sequential karhunen-loeve basis extraction and its application to
  images.
\newblock {\em Image Processing, IEEE Transactions on}, 9(8):1371--1374, 2000.

\bibitem{Lib12}
Edo Liberty.
\newblock Simple and deterministic matrix sketching.
\newblock In {\em Proceedings 19th ACM Conference on Knowledge Discovery and
  Data Mining}, (arXiv:1206.0594 in June 2012), 2013.

\bibitem{mahoney2009cur}
Michael~W. Mahoney and Petros Drineas.
\newblock Cur matrix decompositions for improved data analysis.
\newblock {\em Proceedings of the National Academy of Sciences},
  106(3):697--702, 2009.

\bibitem{metwally2006integrated}
Ahmed Metwally, Divyakant Agrawal, and Amr~El. Abbadi.
\newblock An integrated efficient solution for computing frequent and top-k
  elements in data streams.
\newblock {\em ACM Transactions on Database Systems}, 31(3):1095--1133, 2006.

\bibitem{mg-fre-82}
J.~Misra and D.~Gries.
\newblock Finding repeated elements.
\newblock {\em Sc. Comp. Prog.}, 2:143--152, 1982.

\bibitem{muthukrishnan05:_book}
S.~Muthukrishnan.
\newblock {\em Data Streams: Algorithms and Applications}.
\newblock Foundations and Trends in Theoretical Computer Science. Now
  publishers, 2005.

\bibitem{NN13}
Jelani Nelson and Huy~L. Nguyen.
\newblock {OSNAP}: Faster numerical linear algebra algorithms via sparser
  subspace embeddings.
\newblock In {\em Proceedings 54th IEEE Syposium on Foundations of Computer
  Science}, 2013.

\bibitem{ross2008incremental}
David~A Ross, Jongwoo Lim, Ruei-Sung Lin, and Ming-Hsuan Yang.
\newblock Incremental learning for robust visual tracking.
\newblock {\em International Journal of Computer Vision}, 77(1-3):125--141,
  2008.

\bibitem{rudelson2007sampling}
Mark Rudelson and Roman Vershynin.
\newblock Sampling from large matrices: An approach through geometric
  functional analysis.
\newblock {\em Journal of the ACM}, 54(4):21, 2007.

\bibitem{sarlos2006improved}
Tamas Sarlos.
\newblock Improved approximation algorithms for large matrices via random
  projections.
\newblock In {\em Foundations of Computer Science, 2006. FOCS'06. 47th Annual
  IEEE Symposium on}. IEEE, 2006.

\end{thebibliography}

\newpage

\appendix
\section{Tables of Previous Bounds}
\label{app:tables}

In this appendix we try to survey the landscape of work in low-rank matrix approximation.  There are many types of bounds, models of construction, and algorithms.  We tried to group them into three main categories: Streaming, Fast Runtime, and Column Sampling.  We also tried to write bounds in a consistent compatible format.  To do so, some parts needed to be slightly simplified -- hopefully we got everything correct.  The authors will be glad to know if we missed or misrepresented any results.  

The space and time bounds are given in terms of $n$ (the number of rows), $d$ (the number of columns), $k$ (the specified rank to approximate), $r$ (the rank of input matrix $A$), $\eps$ (an error parameter), and $\delta$ (the probability of failure of a randomized algorithm).  
An expresion $\tilde O(x)$ hides $\textsf{poly}\log (x)$ terms.

The size is sometimes measured in terms of the number of columns (\#C) and/or the number of rows (\#R).  Otherwise, if \#R or \#C is not specified the space refers the number of words in the RAM model where it is assumed $O(\log nd)$ bits fit in a single word.  

The error is of one of several forms.  
\begin{itemize}\denselist
\item A \emph{projection} result builds a subspace $G$ so that $\hat{A} = \pi_G(A)$, but does not actually construct $\pi_G(A)$.  This is denoted by \textsf{P}.  
\\
Ideally $\rank(G) = k$.  When that is not the case, then it is denoted $\textsf{P}_r$ where $r$ is replaced by the rank of $G$.  
\item A \emph{construction} result builds a series of (usually 3) matrices (say $C$, $U$, and $R$) where $\hat{A} = CUR$.  Note again, it does not construct $\hat{A}$ since it may be of larger size than all of $C$, $U$, and $R$ together, but the three matrices can be used in place of $\hat{A}$.   This is denoted \textsf{C}.
\item \emph{$\eps$-relative error} is of the form 
$\|A - \hat A\|_F \leq (1+\eps)\|A - A_k\|_F$ where $A_k$ is the best rank-$k$ approximation to $A$.  This is denoted $\eps$\textsf{R}.  
\item \emph{$\eps$-additive error} is of the form
$\|A- \hat A\|^2_F \leq \|A - A_k\|^2_F + \eps \|A\|^2_F$.  This is denoted $\eps$\textsf{A}.
\\
This can sometimes also be expressed as a spectral norm of the form
$\|A - \hat A\|^2_2 \leq \|A - A_k\|^2_2 + \eps \|A\|^2_F$ (note the error term $\eps \|A\|^2_F$ still has a Frobenius norm).  This is denoted $\eps$\textsf{L}$_2$.  
\item In a few cases the error does not follow these patterns and we specially denote it.  
\item Algorithms are randomized unless it is specified.  In all tables we state bounds for a constant probability of failure.  If we want to decrease the probability of failure to some parameter $\delta$, we can generally increase the size and runtime by $O(\log(1/\delta))$.  \end{itemize}

\begin{table}[b]
  \begin{tabular}{|p{2.7cm}|p{3.8cm}|p{4.5cm}|p{4cm}|}
\cline{1-4}  
\multicolumn{4}{|c|}{\label{tbl:steam}\textbf{Streaming algorithms}}   \\
\cline{1-4}
\textbf{Paper} & \textbf{Space} & \textbf{Time} & \textbf{Bound} \\
     \cline{1-4} 
 DKM06\cite{drineas2006fast2} \newline LinearTimeSVD  & 
    \#R = $O(1/\eps^2)$ \newline $O((d+1/\eps^2)/\eps^4)$ & 
    $O((d+1/\eps^2)/\eps^4 + \nnz(A))$ & 
    \pbox{4cm}{\textsf{P}, $\eps$\textsf{L}$_2$} \\
    \cline{2-4}
    & \#R = $O(k/\eps^2)$ \newline $O((k/\eps^2)^2(d+k/\eps^2))$ & 
    $O((k/\eps^2)^2(d+k/\eps^2)+ \nnz(A))$ & 
    \pbox{4cm}{\textsf{P}, $\eps$\textsf{A}} \\
    \cline{1-4}
  Sar06\cite{sarlos2006improved}   \newline turnstile
    & \#R  = $O(k/\eps + k\log k)$ \newline 
        $O(d(k/\eps + k\log k))$ & 
        $O(\nnz(A)(k/\eps+k \log k) + d(k/\eps+k \log k)^2))$ & 
       \textsf{P}$_{O(k/\eps + k \log k)}$, $\eps$\textsf{R}\\
 \cline{1-4} 
 	CW09\cite{clarkson2009numerical}   &\#R =  $O(k/\eps)$  & $O(nd^2 + (ndk/\eps))$ & \textsf{P}$_{O(k/\eps)}$, $\eps$\textsf{R} \\
 \cline{1-4} 
 	CW09\cite{clarkson2009numerical} &$O((n+d)(k/\eps))$  & $O(nd^2 + (ndk/\eps))$ & \textsf{C}, $\eps$\textsf{R} \\
 	\cline{1-4}
 	CW09\cite{clarkson2009numerical} \newline turnstile & $O((k/\eps^2) (n+d/\eps^{2}) )$  & $O(n(k/\eps^2)^2 + nd(k/\eps^2)+nd^2)$ & \textsf{C}, $\eps$\textsf{R} \\
 	\cline{1-4}
    FSS13\cite{FSS13} \newline deterministic & $O((dk/\eps) \log n)$ & $n ((dk/\eps)\log n)^{O(1)}$ &
    $\textsf{P}_{2\lceil k/\eps \rceil}$, $\eps \textsf{R}$ \\
 	\cline{1-4}
    Lib13\cite{Lib12} \newline deterministic, $\rank(Q) \leq 2/\eps$ & \#R = $2/\eps$ \newline $O(d/\eps)$ & $O(nd/\eps)$ & Any unit vector $x$ \newline $0 \leq \|Ax\|^2 - \|Qx\|^2 \leq \eps\|A\|_F^2$ \\
 	\cline{1-4}
    Lib13\cite{Lib12} \newline deterministic,  \newline $\rho = \|A\|_F^2/\|A\|_2^2$ & \#R = $O(\rho/\eps)$ \newline $O(d \rho /\eps)$ & $O(nd \rho /\eps)$ & \textsf{P}$_{O(\rho/\eps)}$, $\eps$\textsf{L}$_2$ \\
  \hline \hline
    \textit{This paper} \newline deterministic & \#R = $\lceil k/\eps + k \rceil$  \newline $O(dk/\eps)$ & $O(ndk^2/\eps^2)$ & \textsf{P}, $\eps$\textsf{R}\\
	\cline{2-4}    
    & \#R = $c \lceil k/\eps + k \rceil$, $c > 1$  \newline $O(dk/\eps)$ & $O(\frac{c^2}{c-1}ndk/\eps)$ & \textsf{P}, $\eps$\textsf{R}\\
	\cline{1-4}    
    \end{tabular}
\end{table}

\begin{table}
  \begin{tabular}{|p{2cm}|p{4.5cm}|p{4.7cm}|p{3.8cm}|} 

\cline{1-4}  
\multicolumn{4}{|c|}{\label{tbl:time}\textbf{Algorithms with Fast Runtime}}   \\
\cline{1-4}
\textbf{Paper} & \textbf{Space} & \textbf{Time} & \textbf{Bound} \\
 \cline{1-4} 
    AM01\cite{achlioptas2001fast} & $O(nd)$ & $O(nd)$ & $\|A-\hat{A}_k\|_2 \leq \|A-A_k\|_2 + 10(\max_{i,j} |A_{ij}|)\sqrt{n+d}$\\
    \cline{1-4}
    CW13\cite{clarkson2013low} & 
     $O((k^2/\eps^6)\log^4 (k/\eps)+(nk/\eps^3)\log^2(k/\eps)+(nk/\eps)\log(k/\eps))$ & 
     $O(\nnz(A)) + \tilde{O}(nk^2/\eps^4+k^3/\eps^5)$ & 
     \textsf{C}, $\eps$\textsf{R} \\
     \cline{1-4} 
     NN13\cite{NN13} &?& $O(\nnz(A))+\tilde{O}(nk^2+nk^{1.37}\eps^{-3.37}+k^2.37\eps^{-4.37})$ &  \textsf{C}, $\eps$\textsf{R} \\
     \cline{2-4} 
     &?& $O(\nnz(A) \log^{O(1)}{k}) + \tilde{O}(nk^{1.37}\eps^{-3.37}+k^{2.37}\eps^{-4.37})$ &  \textsf{C}, $\eps$\textsf{R} \\
     \cline{1-4} 
    \end{tabular}
\end{table}

\makeatletter
\setlength{\@fptop}{0pt}
\makeatother

\begin{table}
\label{tbl:ColSamp}
  \begin{tabular}{|p{3.3cm}|p{3.7cm}|p{5.5cm}|p{2.5cm}|}
 
\cline{1-4}  
\multicolumn{4}{|c|}{\label{tbl:column}\textbf{Column Sampling algorithms}}   \\
\cline{1-4}
\textbf{Paper} & \textbf{Space} & \textbf{Time} & \textbf{Bound} \\
    \cline{1-4} 
      FKV04\cite{frieze2004fast} & 
     $O(k^4/\eps^{6}\max(k^4,\eps^{-2}))$ & 
     $O(k^5/\eps^{6} \max(k^4,\eps^{-2}))$ & 
     \textsf{P}, $\eps$\textsf{A} \\
    \cline{1-4}
    DV06~\cite{deshpande2006adaptive}  & 
    \#C = $O(k/\eps+k^2\log k)$ \newline $O(n(k/\eps + k^2 \log k))$ & 
    $O(\nnz(A)(k/\eps+k^2\log k)+$ \newline $(n+d)(k^2/\eps^2+k^3\log (k/\eps)+k^4\log^2k))$ & 
    \pbox{4cm}{\textsf{P}, $\eps$\textsf{R}}\\ 
 \cline{1-4} 
 DKM06\cite{drineas2006fast2} \newline ``LinearTimeSVD''  & 
    \#C = $O(1/\eps^2)$ \newline $O((n+1/\eps^2)/\eps^4)$ & 
    $O((n+1/\eps^2)/\eps^4 + \nnz(A))$ & 
    \pbox{4cm}{\textsf{P}, $\eps$\textsf{L}$_2$} \\
    \cline{2-4}
    & \#C = $O(k/\eps^2)$ \newline $O((k/\eps^2)(n+k/\eps^2))$ & 
    $O((k/\eps^2)^2(n+k/\eps^2) + \nnz(A))$ & 
    \pbox{4cm}{\textsf{P}, $\eps$\textsf{A}} \\
    \cline{1-4}
    DKM06\cite{drineas2006fast2} \newline ``ConstantTimeSVD''  & 
    \#C+R = $O(1/\eps^4)$ \newline $O(1/\eps^{12} + nk/\eps^4)$ & 
    $O((1/\eps^{12}+nk/\eps^{4} + \nnz(A))$ & 
    \pbox{4cm}{\textsf{P}, $\eps$\textsf{L}$_2$} \\
    \cline{2-4}
    & \#C+R = $O(k^2/\eps^4)$ \newline $O(k^6/\eps^{12} + nk^3/\eps^4)$ & 
    $O(k^6/\eps^{12}+nk^3/\eps^4 + \nnz(A))$ & 
    \pbox{4cm}{\textsf{P}, $\eps$\textsf{A}} \\
  \cline{1-4}
   DMM08~\cite{drineas2008relative} \newline ``CUR'' & 
   \#C =$O(k^2/\eps^{2})$ \newline \#R = $O(k^4/\eps^{6})$ & 
   $O(nd^2) $ & 
   \pbox{4cm}{\textsf{C}, $\eps$\textsf{R}}\\ 
    \cline{1-4}
    MD09~\cite{mahoney2009cur} \newline ``ColumnSelect''  & 
    \#C = $O(k \log k/\eps^{2})$ \newline $O(nk \log k /\eps^{2})$ & 
    $O(nd^2)$ & 
    \pbox{4cm}{$\textsf{P}_{O(k \log k/\eps^{2})}$, $\eps$\textsf{R}}\\
    \cline{1-4}
   BDM11~\cite{boutsidis2011near}  & 
   \#C = $2k/\eps (1+o(1))$ & 
   $O((ndk+dk^3)\eps^{-2/3})$ & 
   \pbox{4cm}{$P_{2k/\eps(1+o(1))}$, $\eps$\textsf{R}}\\ 
    \cline{1-4}
    \end{tabular}
\end{table}

\end{document}